\theoremstyle{plain}
\newtheorem{thm}{Theorem}[section]
\newtheorem{ass}{Assumption}
\newtheorem{rem}[thm]{Remark}
\newtheorem{defn}[thm]{Definition}
\newcommand{\R}{{\mathbb R}}
\newcommand{\cL}{{\mathcal L}}
\newcommand{\cO}{{\mathcal O}}
\newcommand{\cQ}{{\mathcal Q}}
\newcommand{\cN}{{\mathcal N}}
\newcommand{\cA}{{\mathcal A}}
\newcommand{\cX}{{\mathcal X}}
\newcommand{\bE}{{\mathbf E}}
\newcommand{\bP}{{\mathbb P}}
\newcommand{\bQ}{{\mathbb Q}}
\newcommand{\KL}{{\it KL}}
\newcommand{\wrt}{with respect to }
\begin{document}

\title{Computationally feasible bounds for the free energy of nonequilibrium steady states, applied to simple models of heat conduction}

\author{Luigi Delle Site\textsuperscript{a}, Carsten Hartmann\textsuperscript{b}\thanks{Corresponding author: carsten.hartmann@b-tu.de}\\
\affil{\textsuperscript{a}Freie Universität Berlin, Institut für Mathematik, 14195 Berlin, Germany\\ \textsuperscript{b}Brandenburgische Technische Universität Cottbus-Senftenberg\\ Institut für Mathematik, 03046 Cottbus, Germany}}%
\thanks{This work was partly supported by the DFG Collaborative Research Center 1114 ``Scaling Cascades in Complex Systems'', project No.235221301, Projects A05 (C.H.) ``Probing scales in equilibrated systems by optimal nonequilibrium forcing'' and C01 (L.D.S.) ``Adaptive coupling of scales in molecular dynamics and beyond to fluid dynamics''. 
		One author (C.H.) acknowledges support by the German Federal Government, the Federal Ministry of Education and Research and the State of Brandenburg within the framework of the joint project ``EIZ: Energy Innovation Center'' (project numbers 85056897 and 03SF0693A) with funds from the Structural Development Act (Strukturstärkungsgesetz) for coal-mining regions.}

\date{\today}

\maketitle
\begin{abstract}
In this paper we study computationally feasible bounds for relative free energies between two many-particle systems. Specifically, we consider systems out of equilibrium that do not necessarily satisfy a fluctuation-dissipation relation, but that nevertheless admit a nonequilibrium steady state that is reached asymptotically in the long-time limit. The bounds that we suggest are based on the well-known Bogoliubov inequality and variants of Gibbs' and Donsker-Varadhan variational principles.   
As a general paradigm, we consider systems of oscillators coupled to heat baths at different temperatures. For such systems, we define the free energy of the system relative to any given reference system (that may or may not be in thermal equilibrium) in terms of the Kullback-Leibler divergence between steady states. 
By employing a two-sided Bogoliubov inequality and a mean-variance approximation of the free energy (or cumulant generating function, we can efficiently estimate the free energy cost needed in passing from the reference system to the system out of equilibrium (characterised by a temperature gradient). A specific test case to validate our bounds are harmonic oscillator chains with ends that are coupled to Langevin thermostats at different temperatures; such a system is simple enough to allow for analytic calculations and general enough to be used as a prototype to estimate, e.g., heat fluxes or interface effects in a larger class of nonequilibrium particle systems.
 \end{abstract}

\section{Introduction}
Particle systems subject to thermal gradients are prime examples of nonequilibrium systems that can reach a nonequilibrium steady state in the long-time limit. Such systems are of ongoing interest in molecular nanoscience as they give rise to a variety of nonequilibrium phenomena that can be used for developing future technology.
For example, experimental studies have detected non-uniform
temperature fields around membranes of living cells \cite{TTT1, TTT1b,TTT1c}. One key question in this context concerns how temperature gradients affect the geometry of membranes. It is thought that even small temperature variations could generate interesting phenomena, such as unexpected shape responses. These results suggest that the shape of a membrane can be tuned externally by controlling the temperature field in its immediate vicinity giving rise to the possibility of a wide range of bio-technological applications \cite{TTT2}. Another explicit example of high interest in current research concerns ionic liquids. 
Ionic liquids are molecular liquids composed
of cations (positively charged molecules) and anions (negatively charged molecules); their relevance in current technology is growing steadily due to their many interesting properties. Most importantly, they can be used as green solvents as they are easily recyclable \cite{TTT3}. One possibility
for the recycling process is to induce a response via thermal gradients so that a phase separation (ionic liquid-rich phase) occurs at some critical level of nonequilibrium forcing, thus separating the ionic-liquid from other components, e.g., water \cite{TTT4}. Besides these two specific example, there exists a wide range of phenomena of interest in this field, from the evaporation of liquids \cite{TTT5a,TTT5b}, to the thermomolecular orientation of nonpolar fluids\cite{TTT6}, effects of thermo-phoresis \cite{TTT7a,TTT7b} and thermo-osmosis \cite{TTT8}, separation in liquid mixtures \cite{TTT9a,TTT9b}, diffusion of polymers in a solvent \cite{TTT10a,TTT10b}, heat transfer in protein–water interfaces \cite{TTT11}, or the polarization of water \cite{TTT12} to name just a few. We should mention that particle systems coupled to different heat baths are classical stochastic optimisation tools that nowadays play a major role in machine learning where such models are used for training of high-dimensional neural networks; see, e.g.~\cite{marinari1992simulated,Leimkuhler2019,breiten2021stochastic}

There is a variety of numerical methods from molecular simulation that can deal with such nonequilibrium processes, treating the heat bath directly \cite{giovanni-noneq,noneq1,noneq2,prlabbas,TTT13,TTT14,TTT15} or in a multiscale fashion \cite{TTT16,TTT17}. Nonetheless, a key difficulty in many applications is the calculation of thermodynamic averages or time-dependent quantities, such as heat fluxes, large deviations rate functions, or entropy production rates, since the simulations often suffer from poor convergence. In this situation, enhanced sampling methods or variance minimisation techniques come into play, but theses methods are often tied to a thermodynamic equilibrium setting; see \cite{henin2017} and the references therein.   

\subsection{Own contribution and outline of this article}

In this work, we study---in a restricted setting--- estimators for free energies, cumulant generating functions, or large deviations rate functions that are often the quantities of interest in the simulation of nonequilibrium systems. Unlike the thermodynamic free energy that does not have an a priori meaning for nonequilibrium systems, the cumulant generating function and its close relative, the large deviations rate function, can be unambigously defined. Here we exploit that the free energy has the form of a cumulant generating function and suggest computationally cheap estimates or bounds that are based on variations of the same theme: the Gibbs variational principle for the cumulant generating function (or free energy) and its dual formulation, the Donsker-Varadhan variational principle for the Kullback-Leibler divergence (or relative entropy).  An advantage of these varisational formulations and the resulting free nenergy bounds is that they can be easily generalised to the infinite dimensional setting, so as to include, e.g., hydrodynamic limits or problems on path space. 

Yet, for the sake of the argument, we do all this in a rather simplified setting: we consider a universal and simple many-particle system subject to a thermal gradient; specifically, we consider harmonic oscillator chains in contact with two different heat baths, and we even restrict ourselves to the one-dimensional case which is a simplified generic prototype of the examples above. Despite its simplicity, the harmonic model  can be useful for predicting basic trends of the physics associated to the heat flow and thus serves as a reference for building more complex models. 
For multiscale approaches this particular system has been used in a hybrid fashion together with fully resolved molecular dynamics \cite{TTT16,TTT17}. Moreover, it is routinely used as a model for simulating low dimension materials \cite{TTT18} or modern quantum technological devices \cite{TTT19}, to give just two examples. From a theoretical point of view, the study of heat transport in low-dimensional oscillator lattices has generated a massive effort in research as it is one of the key paradigms to study the mechanism of heat transport that is in many regards still not fully understood \cite{TTT20}. 

Specifically, in this paper, we extend recent results on the two-sided Bogoliubov inequality as a computational tool to estimate interface free energies and finite size effects in equilibrium systems \cite{DelleSite2017,Reible2022,Reible2023}. We suggest a formulation that is applicable to nonequilibrium situations  and derive further approximations, such as the mean-variance approximation that is known from mathematical finance \cite{newbery1988accuracy} and a finite-time bound. We illustrate our findings with the simplest possible model of heat conduction: a harmonic chain of oscillators with pinned ends coupled to two different heat baths. 

The article is structured as follows: In Section \ref{sec:osc}, we introduce the basic Langevin model that will be used throughout the paper to streanmline the discussion. In Section \ref{sec:freeEnergy} a nonequilibrium free energy concept is introduced and the corresponding bounds and approximations are derived. The empirical validation of the bounds and the application to simple models of heat conduction is presented in Section \ref{sec:cases}, before the article concludes in Section \ref{sec:conc}. Appendix \ref{sec:fir} records an auxiliary technical result.

\section{Oscillators coupled to two different heat baths}\label{sec:osc}

We consider a homogeneous assembly of $N$ coupled particles in $\R^d$, the dynamics of which is governed by the Hamiltonian
\begin{equation}\label{H}
	H(q,p) = \sum_{i=1}^N  \frac{|p_i|^2}{2 m_i} + V(q_i) + \sum_{j>i} U(q_{j}-q_i)\,.
\end{equation}
Here $q=(q_1,\ldots,q_N)$ and $p=(p_1,\ldots,p_N)$, where $q_i\in\R^d$ is the displacement of the $i$-th particle relative to some (arbitrary) reference positions, $m_i>0$ is its mass, and $p_i\in\R^d$ is its conjugate momentum. We consider only $d=1,2,3$.
The functions $V,U\colon\R^d\to\R$ are smooth pinning and interaction potentials that are assumed to be bounded below, with $V$ polynomially growing at infinity.

We suppose that if the system is in contact with a heat bath at constant inverse temperature $\beta=(k_BT)^{-1}$, with $k_B$ denoting Boltzmann's constant and $T>0$ the temperature, then the dynamics admits a unique invariant measure $\mu_{\rm e}$ with density 
\begin{equation}\label{canDense}
	\rho_{\rm e} = \frac{1}{Z_{\rm e}}e^{-\beta H}\,,\quad Z_{\rm e} = 	\int e^{-\beta H}\,dq dp\,.
\end{equation}
(Here and in the following, we set $k_B=1$.) Existence of an invariant measure on $\R^{2Nd}\times\R^{2Nd}$ requires that $V\neq 0$, for otherwise the Hamiltonian is invariant under translations $q_i\mapsto q_i + u$ by an arbitrary vector $u\in\R^d$, and this implies $Z=\infty$ since a translation-invariant Hamiltonian is no longer coercive. 

In this paper, we assume that the system is in contact with \emph{two} heat baths---more specifically, we let $L,R\subset\{1,\ldots,N\}$  be two disjoint subsets of the particle index set, and consider the Langevin dynamics 
\begin{equation}\label{langNoneq}
	\begin{aligned}
		dq_i & = \frac{\partial H}{\partial p_i}dt\,, \quad i=1\ldots N\\
		dp_i & = -  \frac{\partial H}{\partial q_i}dt\,, \quad i\notin L,R \\
		dp_i & = - \left(\frac{\partial H}{\partial q_i}dt + \gamma \frac{\partial H}{\partial p_i}\right)dt + \sqrt{2\gamma T_L}dW_i \,, \quad i\in L \\
		dp_i & = - \left(\frac{\partial H}{\partial q_i}dt + \gamma \frac{\partial H}{\partial p_i}\right)dt + \sqrt{2\gamma T_R}dW_i \,, \quad i\in R \\
	\end{aligned}
\end{equation}
Here $T_L,T_R>0$ denote the temperatures for the particles in $L$ and $R$. One such example is the one-dimensional linear chain, with $L=\{1\}$ and $R=\{N\}$, with pinned ends  and harmonic nearest neighbour interactions: 
\begin{equation}\label{linearChain}
	H(q,p) = \sum_{i=1}^N  \frac{p_i^2}{2 m_i} + \frac{\alpha}{2}\left(q_1^2+q_N^2\right) + \frac{1}{2}\sum_{i=1}^{N-1} (q_{i+1}-q_i)^2\,,\quad \alpha>0\,.
\end{equation}
For $T_L\neq T_R$ the system admits a unique Gaussian invariant measure that is different from $\rho_{\rm e}$ unless $T_L=T_R=T$. (See below for details.)

\subsubsection*{Compact writing}

We will often use a more compact notation to display the nonequilibrium dynamics (\ref{langNoneq}): setting $x=(q,p)\in\R^{2Nd}$ and introducing the matrices
\begin{equation}
	J = \begin{pmatrix}
		0 & I \\ -I & 0
	\end{pmatrix}\,,\quad D = \begin{pmatrix}
	0 & 0 \\ 0 & \hat{\gamma}
	\end{pmatrix}\,,\quad C = \begin{pmatrix}
	0 \\ \hat{\sigma}
	\end{pmatrix},
\end{equation}
with the obvious definitions of the square matrices $\hat{\gamma},\,\hat{\sigma}\in\R^{(|L|+|R|)d\times (|L|+|R|)d}$, the Langevin equation (\ref{langNoneq}) can be recast as
\begin{equation}\label{langPHS}
	dx(t) = (J-D)\nabla H(x(t)) + CdW(t)
\end{equation}
where $W=(W_L,W_R)$ is an $(|L|+|R|)d$-dimensional Brownian motion that comprises all the noisy components of the nonequilibrium system. 

\subsection{Equilibrium distribution} 

Using the representation (\ref{langPHS}), it is easy to see that $\rho_{\rm e}$ as given by (\ref{canDense}) is an invariant density if $T_L=T_R=T$. To this end, we define the second-order differential operator 
\begin{equation}
	L = \frac{1}{2} CC^\top\colon\nabla^2 + (J-D)\nabla H\cdot\nabla 
\end{equation} 
where $A\colon B={\rm tr}(A^\top B)$ denotes the inner product between matrices (also called: Frobenius inner product). Further letting $\mu(f)$ denote the expectation of a smooth, compactly supported test function $f$ \wrt a probability measure $\mu$, then $\mu=\mu_{\rm e}$ is invariant under the dynamics generated by $L$ if $\mu(Lf)=0$: 
\begin{equation}
	\mu_{\rm e}(Lf) = \int (Lf)\, d\mu_{\rm e} =  \int (Lf) \rho_{\rm e} \,dx =  \int f (L^*\rho_{\rm e}) \,dx = 0\,,
\end{equation} 
where $L^*$ is the formal $L^2$-adjoint of $L$; since the matrix $C$ is constant and the Hamiltonian part $J\nabla H$ of the vector field $(J-D)\nabla H$ is divergence-free, we have 
\begin{equation}
	L^* = \frac{1}{2} CC^\top\colon\nabla^2 - (J-D)\nabla H\cdot\nabla - D\colon\nabla^2 H\,.  
\end{equation} 
Now if $T_L=T_R=T$ is constant, then $2D=\beta CC^\top$ and thus 
\begin{equation}
	L^*\exp(-\beta H) = 0\,.
\end{equation}
If $T_L\neq T_R$, then \cite[Theorem 1.1]{Carmona2007} implies that there exists a unique invariant measure $\mu_{\rm ne}\neq\mu_{\rm e}$, with a strictly positive density and the property
\begin{equation}\label{expConv}
	|\mu_{\rm ne}(f) - \mu_t(f)| \le M e^{-\lambda t}
\end{equation}
for some constants $M,\lambda>0$, where $\mu_t$ is the distribution of the process at time $t$, starting from an arbitrary initial distribution $\mu_0$, and $M=M(f,\mu_0)$. 

\subsection{The Gaussian case} A special situation is when the dynamics is linear, in which case the invariant measures $\mu_{\rm ne}$ and $\mu_{\rm e}$ are both Gaussian. We define the square matrix $A=(J-D)\nabla^2 H$ and assume that the Hamiltonian reads
 \begin{equation}\label{quadH}
 	H=\frac{1}{2}x^\top Sx\,,\quad 	S = \begin{pmatrix}
 		K & 0 \\ 0 & M^{-1}
 	\end{pmatrix},
 \end{equation}
for two symmetric and positive definite matrices $M,K\in\R^{Nd\times Nd}$. Then the Langevin dynamics (\ref{langPHS}) is given by 
\begin{equation}\label{ou}
	dx(t) = Ax(t)dt + CdW(t)\,,
\end{equation}
where the matrix $A$ is of the form 
\begin{equation}
		A = \begin{pmatrix}
		0 & M^{-1} \\ -K & -D
	\end{pmatrix}. 
\end{equation}
Since the matrix $A$ for $D=0$ is similar to a skew-symmetric matrix, with all eigenvalues on the imaginary axis, the perturbation by the friction matrix $D\ge 0$ moves all eigenvalues to the strict negative half plane (cf.~\cite{Overton1995}). 

As a consequence the matrix $A$ is stable. Moreover, since no left eigenvector $v$ of $A$ lies in the left kernel of $B$, i.e. $v^\top B\neq 0$, it follows from \cite[Theorem 4.15]{Antoulas2005} that the matrix pair $(A,C)$ is controllable. As a consequence, (\ref{langPHS}) has a unique Gaussian invariant measure $\mu_{\rm ne}=\cN(0,\Sigma)$ where the covariance matrix $\Sigma=\Sigma(T_L,T_R)$ is the unique positive definite solution to the Lyapunov equation
\begin{equation}\label{lyap}
	A\Sigma + \Sigma A^\top = -CC^\top\,.
\end{equation}
In the following we will call $\mu_{\rm ne}=\cN(0,\Sigma)$ the nonequilibrium steady state. Note that $\Sigma(T_L=T,T_R=T)=\beta^{-1}S^{-1}$ agrees with the equilibrium covariance.

\section{Free energy of the nonequilibrium steady state}\label{sec:freeEnergy}

We now consider equilibrium and nonequilibrium steady states, $\mu_{\rm e}$ and $\mu_{\rm ne}$ for a possibly nonlinear Hamiltonian $H$ where we assume that both have strictly positive densities $\rho_{\rm ne},\,\rho_{\rm e}>0$. 
Unlike in the equilibrium situation, there is no a priori notion of a free energy. For example, if $\mu_0$ and $\mu_1$ are two stationary distributions of an \emph{equilibrium} system at inverse temperature $\beta$, with densities $\rho_0\propto e^{-\beta H_0}$ and $\rho_1\propto e^{-\beta H_1}$ where $H_0$ and $H_1=H_0+U$ are two Hamiltonians, then  
\begin{equation}
    \frac{d\mu_1}{d\mu_0}(x) = e^{c(\beta) - \beta U(x)}\,,
\end{equation}
where $c(\beta)$ is a normalisation constant that guarantees that $(d\mu_1/d\mu_0)(x)$ is a probability density function in $x$. In particular, we can define the equilibrium free energy difference between the two quilibrium systems by the standard expression 
\begin{equation}
    \Delta F(\beta) := -\beta^{-1}\log\left(\frac{Z_1}{Z_0}\right)\,, \quad Z_i = \int e^{-\beta H_i(x)}\,dx\quad (i=0,1)\,,
\end{equation}
which shows that the normalisation constant $c(\beta)$ is given by $\beta \Delta F$.

\subsection{Free energy from exponential tilting}

For nonequilibrium systems, or the corresponding steady states the concept of relative free energy or free energy difference has no a priori meaning, yet the likelihood $d\mu_{\rm ne}/d\mu_{\rm e}=\rho_{\rm ne}/\rho_{\rm e}$ between the two steady states does. 

One possibility to measure the discrepancy between $\mu_{\rm ne}$ and $\mu_{\rm e}$ is by using a divergence measure such as the relative entropy (or: Kullback-Leibler divergence)
\begin{equation}
	\KL(\mu_{\rm ne},\mu_{\rm e}) = \int \log\left(\frac{d\mu_{\rm ne}}{d\mu_{\rm e}}\right)d\mu_{\rm ne}\,.
\end{equation}
Here, we assume that the relative entropy is well-defined, in that the log likelihood ratio $\log(d\mu_{\rm ne}/d\mu_{\rm e})=\log(\rho_{\rm ne}/\rho_{\rm e})$ is integrable \wrt the nonequilibrium distribution and there is no division by zero (i.e. $\mu_{\rm ne}$ is absolutely continuous \wrt $\mu_{\rm e}$); cf.~Definition \ref{defn:kl} in the appendix. 

We will argue that the relative entropy between the two distributions, implicitely defines a free energy difference between the corresponding steady states. Nevertheless, we do not want to challenge the widely accepted viewpoint that the free energy of a nonequilibrium system is related to the large deviations rate function in the large time or large particle limit (e.g.~\cite{Derrida2003,Fogedby2012}). In fact, our free energy may turn out to be a large deviations rate functions for the system under a suitable scaling of the variables (e.g. with the systems size or the diffusion coefficient).

\subsubsection*{A variational principle for the relative entropy}
 
To fix ideas, note that, by the Donsker-Varadhan variational principle, the relative entropy admits the following variational characterisation (see e.g.~\cite{DaiPra1996,DelleSite2017}):\begin{equation}\label{dv}
	\KL(\mu,\nu) = \sup\left\{\int \varphi \,d\mu  -\log\int e^{\varphi}\,d\nu \colon \varphi\in \cA(\nu)\right\} \,.
\end{equation}
where the set $\cA(\nu)$ contains all observables $\varphi$ that are bounded from below and for which $e^{\varphi}\in L^1(\nu)$. The supremum is attained for $\varphi$ satisfying 
\begin{equation}
	\frac{d\mu}{d\nu} = e^{\varphi - c(1)}\,, 
\end{equation}
where the normalisation constant $c(1)$ is essentially the cumulant generating function (CGF) of the observable $\varphi$ evaluated at $s=1$: 
\begin{equation}
	c(s) = \log\int e^{s\varphi}d\nu\,.
\end{equation}
Now letting $\mu=\mu_{\rm ne}$, $\nu=\mu_{\rm e}$ and redefining $\varphi:=-\beta G$,  $c(\beta):=-\log\bE_\nu\big[e^{-\beta G}\big]$, we see that we can interpret the nonequilibrium steady state $\mu_{\rm ne}$ as an exponentially tilted version of the equilibrium distribution $\mu_{\rm e}$. Specifically, 
\begin{equation}\label{potentialDef}
	\frac{d\mu_{\rm ne}}{d\mu_{\rm e}} = 	\frac{\rho_{\rm ne}}{\rho_{\rm e}} = e^{c(\beta)-\beta G}\,.
\end{equation} 
We call the function $G$ the \emph{potential} associated with the likelihood ratio between $\mu_{\rm ne}$ and $\mu_{\rm e}$.  In analogy with the equilibrium case, we can think of 
\begin{equation}
 -\beta^{-1}c(\beta) = -\beta^{-1}\log\bE_\nu\left[e^{-\beta G}\right] 
\end{equation}
as a free energy difference between the equilibrium and the nonequilibrium steady state. Note, however, that the interpretation of $\beta$ as a globally defined inverse temperature is misleading, for there is no such thing in the nonequilibrium system. The parameter $\beta$ should be rather be considered a scaling parameter that defines a characteristic energy scale, which also guarantees that $\beta G$ is a dimensionless physical quantity. 

As a consequence of the previous considerations, a two-sided Bogoliubov inequality holds:

\begin{thm}[Two-sided Bogoliubov inequality]\label{thm:bogo}
Assuming that a potential $G$ exists that satisfies (\ref{potentialDef}), with  $-\beta G\in\cA(\mu_{\rm e})$, then 
\begin{equation}\label{2bogo}
\bE_{\mu_{\rm ne}}[G] \le -\beta^{-1}\log\bE_{\mu_{\rm e}}\big[e^{-\beta G}\big] \le \bE_{\mu_{\rm e}}[G]\,.
\end{equation}
\end{thm}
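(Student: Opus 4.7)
The plan is to prove the two inequalities separately, with the right-hand inequality being the classical one-sided Bogoliubov bound and the left-hand inequality following from the tilting structure encoded in (\ref{potentialDef}) together with nonnegativity of the relative entropy.

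For the upper bound $-\beta^{-1}\log\bE_{\mu_{\rm e}}[e^{-\beta G}] \le \bE_{\mu_{\rm e}}[G]$, I would apply Jensen's inequality to the convex function $x\mapsto e^{-\beta x}$ with respect to the equilibrium measure $\mu_{\rm e}$, giving
\begin{equation*}
\bE_{\mu_{\rm e}}\big[e^{-\beta G}\big] \ge \exp\!\big(-\beta \bE_{\mu_{\rm e}}[G]\big).
\end{equation*}
Taking logarithms and multiplying by $-\beta^{-1}$ (which reverses the inequality since $\beta>0$) gives the claim. Integrability of $G$ with respect to $\mu_{\rm e}$ is ensured by the hypothesis $-\beta G\in\cA(\mu_{\rm e})$, since a random variable whose exponential is in $L^1$ and which is bounded from below is itself integrable.

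For the lower bound, I would exploit that the tilting identity (\ref{potentialDef}) implies
\begin{equation*}
\log\frac{d\mu_{\rm ne}}{d\mu_{\rm e}} = c(\beta) - \beta G,\qquad c(\beta)=-\log\bE_{\mu_{\rm e}}\big[e^{-\beta G}\big].
\end{equation*}
Integrating both sides against $\mu_{\rm ne}$ and using the definition of relative entropy yields the exact identity
\begin{equation*}
\KL(\mu_{\rm ne},\mu_{\rm e}) = c(\beta) - \beta \bE_{\mu_{\rm ne}}[G].
\end{equation*}
Since $\KL(\mu_{\rm ne},\mu_{\rm e})\ge 0$ (equivalently, one could plug $\varphi=-\beta G$ into the Donsker--Varadhan formula (\ref{dv}) and observe that this choice of $\varphi$ attains the supremum), rearranging gives $\beta \bE_{\mu_{\rm ne}}[G] \le c(\beta)$, i.e.\ $\bE_{\mu_{\rm ne}}[G]\le -\beta^{-1}\log\bE_{\mu_{\rm e}}[e^{-\beta G}]$.

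The only subtlety, and hence the ``hard'' part, is not a computation but a well-posedness check: I must verify that all quantities appearing are finite and that the integration against $\mu_{\rm ne}$ in the lower bound is justified. This amounts to showing $G\in L^1(\mu_{\rm ne})$ given $-\beta G\in\cA(\mu_{\rm e})$. Since $d\mu_{\rm ne}/d\mu_{\rm e}=e^{c(\beta)-\beta G}$ is bounded on $\{G\le K\}$ for any $K$, and $G$ is bounded below there by assumption, the negative part of $G$ is $\mu_{\rm ne}$-integrable; the positive part is controlled via the tilting and the finiteness of $\bE_{\mu_{\rm e}}[e^{-\beta G}]$. Once these integrability issues are disposed of, both inequalities follow from the one-line arguments above.
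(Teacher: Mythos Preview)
Your proposal is correct and follows essentially the same approach as the paper: Jensen's inequality for the upper bound, and for the lower bound the identity $\KL(\mu_{\rm ne},\mu_{\rm e}) = c(\beta) - \beta\,\bE_{\mu_{\rm ne}}[G]$ combined with nonnegativity of the relative entropy (which the paper phrases as the Donsker--Varadhan supremum being attained at $\varphi=-\beta G$, exactly as you note parenthetically). Your added integrability discussion is a welcome refinement that the paper does not spell out.
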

\begin{proof}
Even though, the proof is very similar to the equilibrium case (e.g.~\cite{DelleSite2017,Reible2022}), we sketch it here for the reader's convenience. 
The upper bound is a straight consequence of Jensen's inequality and the convexity of the exponential function:  
\begin{equation}
   \bE_{\mu_{\rm e}}\big[e^{-\beta G}\big] \ge e^{-\beta\bE_{\mu_{\rm e}}[G]} \quad\Longrightarrow\quad -\beta^{-1}\log\bE_{\mu_{\rm e}}\big[e^{-\beta G}\big] \le \bE_{\mu_{\rm e}}[G]\,.
\end{equation}
The lower bound follows from the Donsker-Varadhan principle (\ref{dv}): setting  $\mu=\mu_{\rm ne}$, $\nu=\mu_{\rm e}$, and $\varphi:=-\beta G$, it follows that  
\begin{equation}
	\KL(\mu_{\rm ne},\mu_{\rm e}) =  -\beta\bE_{\mu_{\rm ne}}[G] - \log\bE_{\mu_{\rm e}}\big[e^{-\beta G}\big] 
\end{equation}
where we have used the assumption that the supremum is attained at $-\beta G\in\cA(\mu_{\rm e})$. Since the KL divergence is nonnegative (and finite), it follows that   
\begin{equation}\label{dv2}
	 \bE_{\mu_{\rm ne}}[G] \le \beta^{-1}\KL(\mu_{\rm ne},\mu_{\rm e}) + \bE_{\mu_{\rm ne}}[G] = -\beta^{-1}\log\bE_{\mu_{\rm e}}\big[e^{-\beta G}\big]\,,
\end{equation}
which proves the lower bound. 
\end{proof}

\begin{rem}
 In many cases, $\beta G$ will be bounded below, but not $-\beta G$. Assuming suitable growth conditions on $G$ for $|x|\to\infty$ and a letting $\mu_{\rm e}$ decay sufficiently fast at infinity, it is  possible to extend the Donsker-Varadhan principle to this case.
\end{rem}

Note that in contrast to the equilibrium situation (cf.~\cite{DelleSite2017,Reible2023}), when $G$ is basically a difference between two Hamiltonians, the function $G$ is only implicitely defined as a likelihood ratio between two probability density functions. 

Yet, estimating the upper and lower bounds is not hopeless: up to an additive constant, $G$ is equal to the difference between the log-density $\beta^{-1}\log\rho_{\rm ne}$ and the Hamiltonian $H$, where the constant appears on both sides, hence plays no role when we are just interested in the difference between the upper and lower bounds; moreover, the (unnormalised) log-density can be routinely estimated  during Monte Carlo simulations, for example, by using kernel density estimators (see e.g. \cite{Sullivan1988,Hazelton2016}).

\subsection{Transient behaviour and relaxation to (non)equilibrium}

Equation (\ref{dv2}) in the proof of the Bogoliubov inequality can be used to derive a modified bound. To this end, we replace $\mu_{\rm ne}$ in (\ref{dv2}) by $\mu_t$, the time-$t$ marginal of (\ref{langNoneq}), starting from the equilibrium distribution $\mu_0=\mu_{\rm e}$ at $t=0$. Moreover, we call 
\begin{equation}\label{langPHS0}
	dz(t) = (J-D)\nabla H(z(t)) + \bar{C} d\bar{W}(t)
\end{equation}
the equilibrium system with the fluctuation-dissipation property $2D=\beta \bar{C}\bar{C}^\top$ and stationary distribution $\mu_{\rm e}$, with density $\rho_{\rm e}\propto e^{-\beta H}$. We further denote by 
\begin{equation}\label{langPHS1}
	dx(t) = (J-D)\nabla H(x(t)) + CdW(t)
\end{equation}
the corresponding nonequilibrium system coupled to two heat baths. (Since we are only interested in the statistical properties of the dynamics, we assume that both the equilibrium and the nonquilibrium dynamics are driven by inependent Brownian motions $\bar{W}$ and $W$ in $\R^{(|L|+|R|)d}$.)  In this case, $C\neq \bar{C}$, i.e. $2D\neq \beta CC^\top$, and there is in general no closed-form expression for the stationary distribution $\mu_{\rm ne}$. 

If we denote by $\rho_t$ the density associated with the distribution $\mu_t$ of the nonequilibrium dynamics at time $t$, then setting for $\mu=\mu_t$, $\nu=\mu_{\rm e}$ in (\ref{dv}) and combining it with the lower bound in (\ref{2bogo}) implies the time-dependent two-sided bound
\begin{equation}\label{transbogo}
    	 \bE_{\mu_{\rm ne}}[G] \le  -\beta^{-1}\log\bE_{\mu_{\rm e}}\big[e^{-\beta G}\big] \le (2\beta)^{-1}\int_0^t \cL(\rho_s),ds + \bE_{\mu_{t}}[G]
\end{equation}
that holds for $t\ge t^*$ and sufficiently large $t^*>0$. Here the Lagrangian
\begin{equation}
     \cL(\rho_t)  =   \frac{1}{4}\int\left|\left(\bar{C}\bar{C}^\top - CC^\top\right)\nabla\log\rho_t(x)\right|^2_{(\bar{C}\bar{C}^{\top})^{\sharp}}\,\rho_t(x)\,dx
\end{equation}
is the so-called Donsker-Varadhan rate functional that bounds the relative entropy according to Theorem \ref{thm:fir} (FIR inequality) in the appendix where $|h|_Q=\sqrt{h^\top Qh}$ is a weighted seminorm \wrt a symmetric matrix $Q\ge 0$  (see appendix). 

Clearly, the upper bound in (\ref{transbogo}) becomes trivial in the long-time limit, since $\cL(\rho_t)$ is bounded away from zero, and so the integral diverges as $t\to\infty$. Nevertheless, it can be useful, because it does not require to sample \emph{both} $\mu_{\rm e}$ and $\mu_{\rm ne}$; it rather depends only on the distribution $\mu_t$ of the \emph{nonequilibrium} dynamics that approaches the unique stationary distribution $\mu_{\rm ne}$ in the limit $t\to\infty$. 

The practical difficulty consists in determining a good value for $t$ that is large enough, so that $\mu_t$ yields an approximation of the lower bound $\bE_{\mu_{\rm ne}}[G] \approx\bE_{\mu_t}[G]$ , but that is not so large that the upper bounds degrades.

\subsection{Mean-variance approximation: small $\beta$ asymptotics}

Yet another approximation to the free energy is the so-called mean-variance approximation that does provide an upper or lower bound, but that becomes exact when the reference (equilibrium) distribution is Gaussian and the effective potential $G$ is linear. By Taylor expanding the free energy about $\beta=0$, it follows that 
\begin{equation}
     -\beta^{-1}\log\bE_{\mu_{\rm e}}\big[e^{-\beta G}\big] =  \bE_{\mu_{\rm e}}[G] - \frac{\beta}{2}{\rm Var}_{\mu_{\rm e}}(G) + O(\beta^2)\,.
\end{equation}
Since the free energy is in general neither convex nor concave as a function of $\beta$, we cannot expect the mean-variance approximation to provide an upper or lower bound. Yet, in practice, and this is empirically demonstrated below, the mean-variance approximation turns out to be a fairly accurate approximation that resides between the upper and lower bounds.

Moreover, and more importantly, it yields a controlled approximation of the free energy that only requires samples from the \emph{equilibrium} distribution $\mu_{\rm e}$.

\section{Case study: the Gaussian chain}\label{sec:cases}

We will now study different variations on the linear harmonic chain to validate the aforementioned bounds as functions of the system size and the strength of the nonequilibrium perturbation. 

\subsection{One-dimensional chain of identical oscillators}

To begin with, we consider a one-dimensional chain of oscillators, with quadratic Hamiltonian $H$ given by (\ref{linearChain}). In this case, the stationary equilibrium and nonequilibrium distributions are the zero-mean Gaussians
\begin{equation}
    \mu_{\rm e}=\cN(0,\beta^{-1}S)^{-1}\,,\quad \mu_{\rm ne}=\cN(0,\Sigma)\,,
\end{equation}
with $\Sigma$ being the unique solution to the Lyapunov equation (\ref{lyap}). Here we use the matrix-vector notation from page \pageref{quadH}, where without loss of generality, we set $\beta=1$ and $T_L=1$. Then the two distributions have the densities
\begin{equation}
    \rho_{\rm e} = \frac{\exp\left(-\frac{1}{2}x^\top S x\right)}{Z_{\rm e}}\,,\quad \rho_{\rm ne} = \frac{\exp\left(-\frac{1}{2}x^\top\Sigma^{-1}x\right)}{Z_{\rm ne}}\,,
\end{equation}
with 
\begin{equation}
    Z_{\rm e} = \frac{(2\pi)^{3Nd}}{\det(S)}\,,\quad  Z_{\rm ne} = (2\pi)^{3Nd}\det(\Sigma)
\end{equation}
The log likelihood ratio thus is
\begin{equation}
    \begin{aligned}
-\log\frac{\rho_{\rm ne}}{\rho_{\rm e}} = \frac{1}{2}\log\det(S\Sigma) - \frac{1}{2}x^\top(S - \Sigma^{-1})x  \,. 
    \end{aligned}
\end{equation}
Using stability results for Lyapunov equations (e.g.~\cite[Theorem 2.1]{Hewer1988}), it can be shown that the difference between $S$ and the equilibrium covariance $\Sigma^{-1}$, measured in the Frobenius norm, is essentially proportional to the Frobenius norm of the difference $\bar{C}\bar{C}^\top-CC^\top$. To see this, note that 
\begin{equation}
    S - \Sigma^{-1} = \Sigma^{-1}(\Sigma-S^{-1})S\,,
\end{equation}
where the matrix $\Sigma-S^{-1}$ is the difference between the nonequilibrium and the equilibrium covariances. The claimed error bound then simply follows from the submultiplicativity of the Frobenius norm, noting that the matrix in the middle on the right hand side solves the Lyapunov equation
\begin{equation}
    A(\Sigma-S^{-1}) + (\Sigma-S^{-1})A^\top = - (CC^\top - \bar{C}\bar{C}^\top)\,,
\end{equation}
where $A=(J-D)\nabla^2 H$. 
Note that the symmetric matrix $G=CC^\top - \bar{C}\bar{C}^\top$ is positive or negative semidefinite, depending on whether $T_L<T_R$ or $T_L>T_R$, which implies that the difference is positive or negative semidefinite. As a consequence, the free energy difference can be positive or negative too. 

\begin{rem}
The free energy $-\log\bE_{\mu_{\rm e}}\big[e^{-G}\big]$ exists even if the quadratic form $G(x)=\frac{1}{2}x^\top(\Sigma^{-1}-S)x$ in the exponent is negative  definite, since $\mu_{\rm e}$ is (sub-)Gaussian, which implies that its tails are sufficiently light to guarantee that the CGF exists; for details on sub-Gaussian probability distributions, we refer to \cite[Sec. 2.5]{Vershynin2018}
\end{rem}

\subsubsection*{Free energy approximation and empirical quality measure}

Specifically, we first consider the Hamiltonian (\ref{linearChain}) with $N=100$ particles, pinned ends and unit mass matrix $M=I$. The temperature at the leftmost (first) particle is set to $T_L=1$ whereas the temperature $T_R$ at the rightmost ($N$-th) particle is varied between $T_R=0.5$ and $T_R=1.5$. Figure \ref{fig:linearChainBounds} shows the free energy and its upper and lower bounds. The upper panel of the figure also shows the arithmetic mean 
\begin{equation}
   \overline{G} = \frac{\bE_{\mu_{\rm e}}[G] + \bE_{\mu_{\rm ne}}[G]}{2}
\end{equation}
that, by definition, provides an approximation of the free energy that is on average closer than the upper and lower bounds. This is illustrated in the lower panel of the figure that shows the relative error of the arithmetic mean and the mean-variance approximation. The mean-variance approximation turns out to approximate the free energy rather well, and thus yields a computationally cheap alternative to the arithmetic mean when the nonequilibrium steady state $\mu_{\rm ne}$ is not available, since it relies completely on \emph{equilibrium} samples.

\begin{figure}
    \centering
    \includegraphics[width=0.55\textwidth]{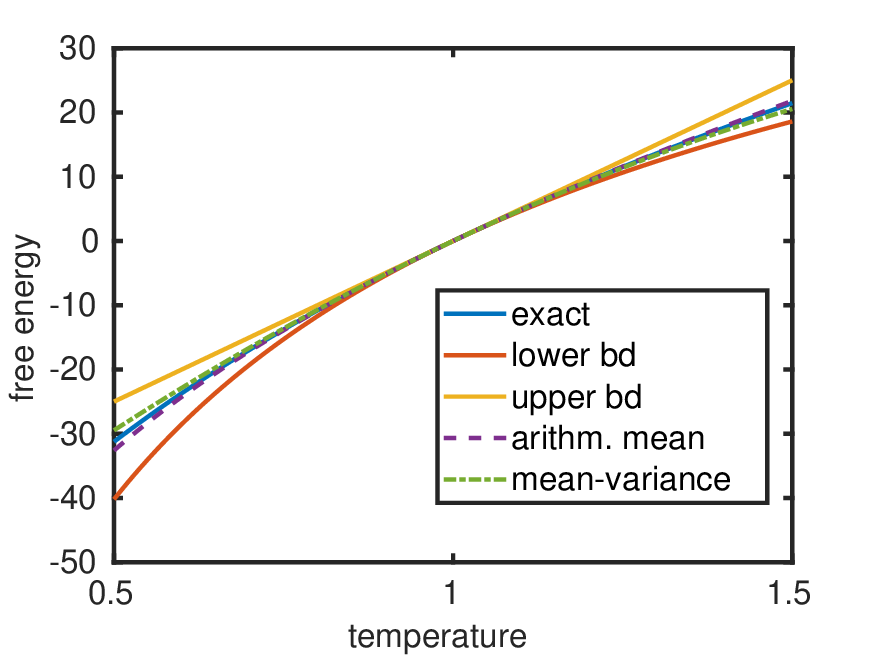}
      \includegraphics[width=0.55\textwidth]{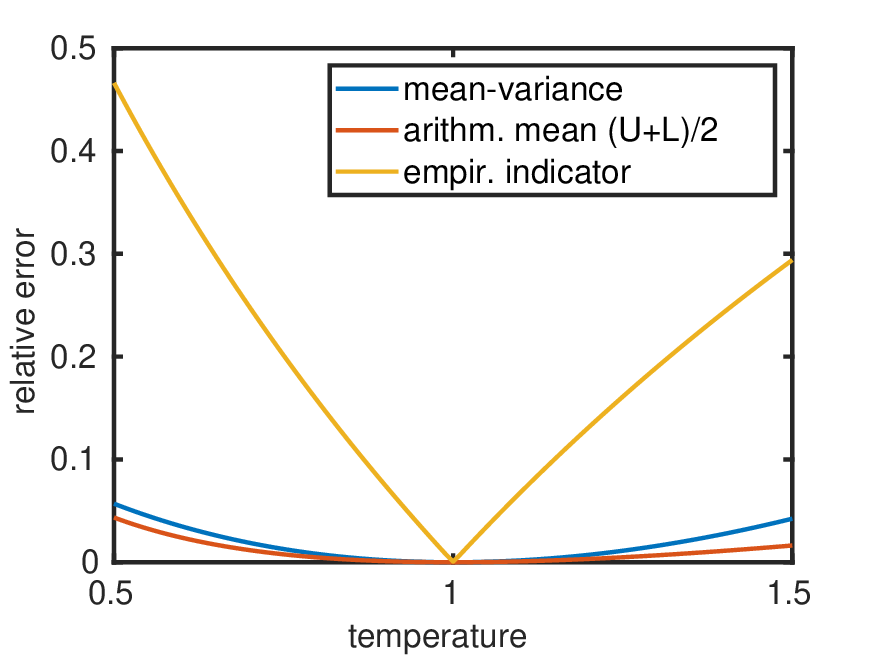}
    \caption{Free energy difference between equilibrium and nonequilibrium steady state and their upper and lower bounds: the upper panel shows the free energy and its approximations, the lower panel shows the exact relative error of the approximations together with the empirical quality measure $Q$ defined in (\ref{Q}).}
    \label{fig:linearChainBounds}
\end{figure}

Since the exact free energy is typically not known, a simple empirical quality measure for the tightness of the two bounds that does not rely on any other energy estimates is the difference between upper and lower bound divided by the mean: 
\begin{equation}\label{Q}
    Q = 2\frac{|\bE_{\mu_{\rm e}}[G] - \bE_{\mu_{\rm ne}}[G]|}{|\bE_{\mu_{\rm e}}[G] + \bE_{\mu_{\rm ne}}[G]|}
\end{equation}
This estimate of the relative error is relatively conservative as the lower panel of Figure \ref{fig:linearChainBounds} shows, yet it is an easy-to-use and computationally cheap criterion.

\subsubsection*{Robustness in the thermodynamic limit}

It turns out that for the homogeneous linear chain, the upper and lower bounds are remarkably robust for large system size. Basically, the upper and lower bounds share the scaling behaviour of the free energy difference as $N\to\infty$. 

To illustrate this observation, we computed the free energy per particle together with its upper and lower bounds per particle. Figure \ref{fig:linearChainLimit} shows the free energy difference, their bounds and the relative error associated with the arithmetic mean for increasing values of $N$. Not only does the free energy grow linearly in the system size (as one would expect), but also the upper and lower bounds do where the relative error saturates quickly and then stays essentially constant.

\begin{figure}
    \centering
    \includegraphics[width=0.55\textwidth]{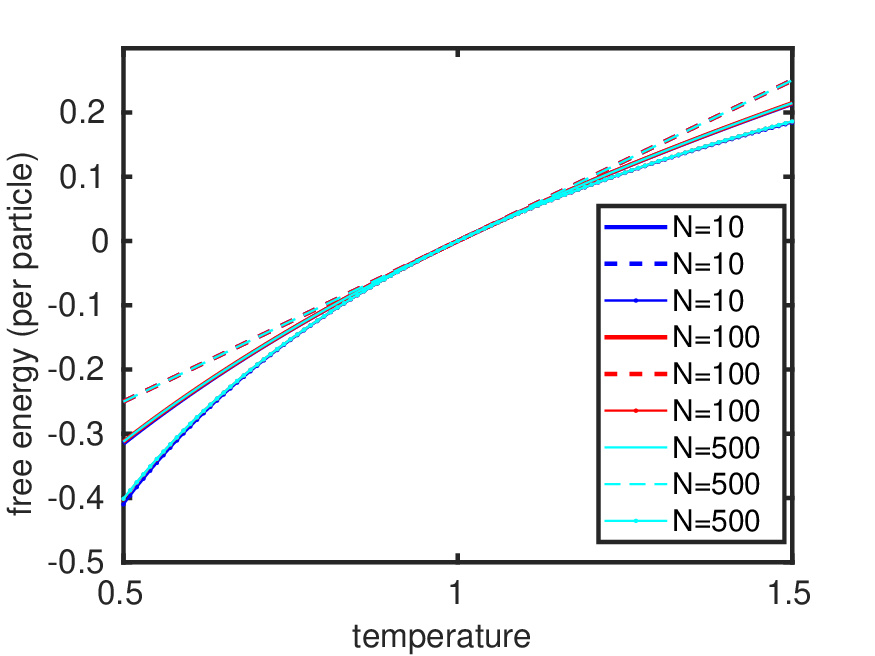}
    \includegraphics[width=0.55\textwidth]{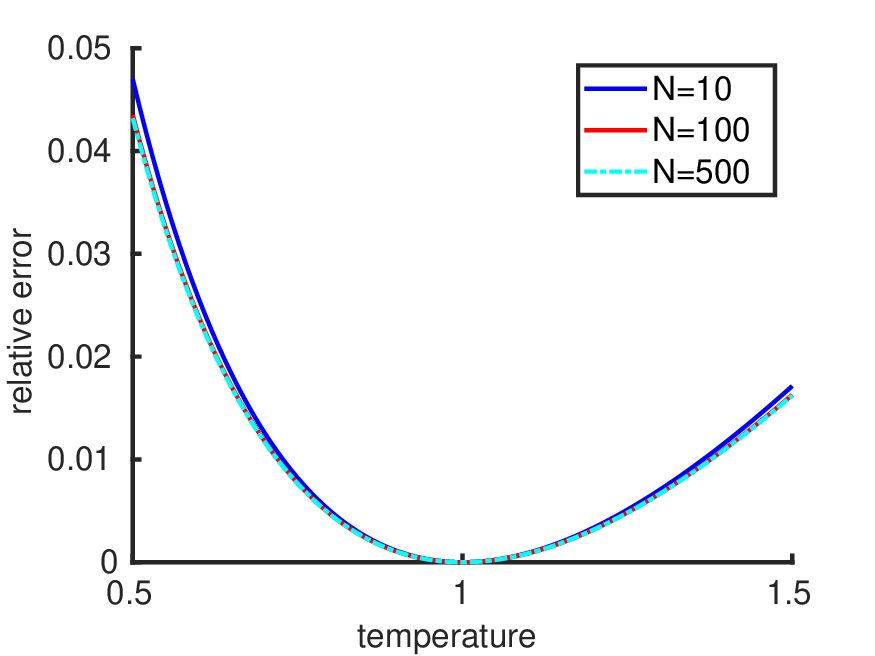}
    \caption{Thermodynamic limit of the free energy: The upper panel shows the free energy difference per particle for different numbers of particles, whereas the lower panel shows the corresponding relatve errors; for particle numbers beyond $N=10$, the free energy and its approximations quickly saturate.}
    \label{fig:linearChainLimit}
\end{figure}

Nonetheless the computation of both upper and lower bounds does become more challenging (as does the direct numerical computation of the free energy), since the principal eigenvalue of the matrix $A$ approaches the imaginary axis. This has consequences for the stable solution of the Lyapunov equation, but also for the approximation of the stationary distributions by Markov chain Monte Carlo (MCMC), since the speed of convergence to the nonequilibrium steady states is determined by the spectral gap of the matrix $A$; see \cite[Thm.~4.6]{Arnold2014}. 

\begin{rem}
The loss of stability as $N\to\infty$ can be understood by noting that the sum of all eigenvalues of the matrix $A$ is bounded: 
\begin{equation}
    \sum_{\lambda\in\sigma(A)} \lambda = {\rm tr}(A) = -{\rm tr}(D) = -2\gamma \,.
\end{equation}
Since the matrix $A$ has $2N$ (possibly multiple) eigenvalues with strictly negative real part, the one with the largest real part must decay to zero at least at a rate $O(\gamma/N)$. This implies that the system becomes unstable in the limit $N\to\infty$. In other words, the effect of the thermostat that is coupled only to the two boundary particles vanishes and the system becomes purely deterministic.  
\end{rem}

\subsubsection*{Transient behaviour I: FIR inequality}

Using the upper bound in (\ref{transbogo}), we can derive an easy-to-compute approximation of the upper bound that only involves $\mu_{\rm ne}$. To this end, we observe that if $\rho_t\approx\rho_{\rm ne}$, then the Lagrangian can be approximated by 
\begin{equation}
    \cL(\rho_t)\approx\cL(\rho_{\rm ne}) 
    = {\rm tr}\left((\bar{C}\bar{C}^\top - 2CC^\top +CC^\top(\bar{C}\bar{C}^\top)^\sharp CC^\top)\Sigma^{-1}\right)\,,
\end{equation}
where we taken advantage of the fact that $\cL(\rho_{\rm ne})$ involves a Gaussian integral that can explicitely solved. A na\"ive approximation of the upper bound  now is
\begin{equation}
    	 \bE_{\mu_{\rm ne}}[G] \le  -\beta^{-1}\log\bE_{\mu_{\rm e}}\big[e^{-\beta G}\big] \lesssim \frac{T}{2\beta}\cL(\rho_{\rm ne}) + \bE_{\mu_{\rm ne}}[G]\,,
\end{equation}
where the symbol $\lesssim$ means that the inequality holds for sufficiently large $T$. Figure \ref{fig:linearChainFIR} shows the approximation (\ref{transbogo}) under the stationary approximation of the Lagrangian for various values of $T$. It can be observed that, if $T$ is too small, then the approximtion underestimates the true free energy difference. However, for intermediate values of $T$ (here: $T=20$), there is a remarkable fit of the free energy profile, yet the appropriate choice of the time horizon $T$ is dimension-dependent and highly problem-specific. One reason is that the scaling of the lower bound in $N$ differs from the scaling of the Lagrangian term, in particular, the minimum time horizon that achieves an upper bound has been found to scale like $T=\cO(N)$, hence $T$ must be always adapted to the systems size. 

Overall, it is unclear whether the FIR-based approximation (\ref{transbogo}) is feasible in practice, unless a good estimate of the de-correlation time of the nonequilibrium dynamics is available.  We leave this question for future work.

\begin{figure}
    \centering
    \includegraphics[width=0.55\textwidth]{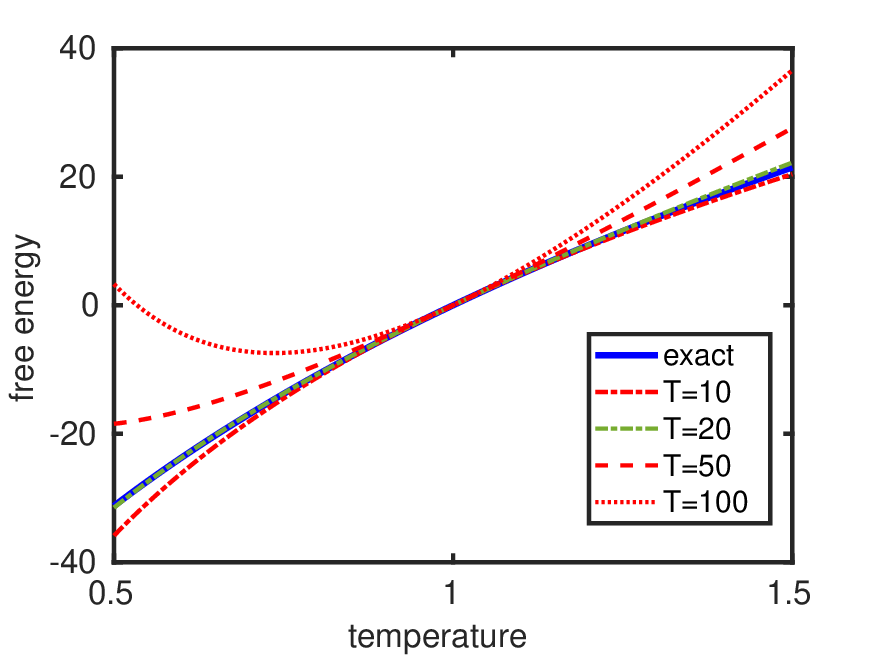}
    \caption{Approximation of the free energy based on the FIR inequality of Theorem \ref{thm:fir} and equation (\ref{transbogo}), using the stationary  approximation of the Lagrangian term $\cL$. Here the optimal terminal time $T$ is close to the de-correlation time of the system (dashed green curve); if $T$ is too small, then (\ref{transbogo}) does not hold.}
    \label{fig:linearChainFIR}
\end{figure}

\subsection{Application to heat conduction}

It has been proved \cite{Rieder1967} that the homogeneous oscillator chain does not satisfy Fourier's law, in that there is no temperature gradient inversely proportional to the chain length. The local temperature can be expressed in terms of the kinetic temperature, which for the $i$-th particle reads 
\begin{equation}
    \theta_i = \bE_{\mu_{\rm ne}}\!\left[p_i\frac{\partial H}{\partial p_i}\right] = {\rm Var}_{\mu_{\rm ne}}\!\left(\frac{p_i}{\sqrt{m_i}}\right)\,.    
\end{equation}
As the upper panel of Figure \ref{fig:linearChainTemp} shows for the case $T_L=1$ and $T_R=2$, with $N=2000$ particles, the temperature drops between the thermostatted boundary particles and their neighbours and then remains constant throughout the chain (dashed red curve). The constant kinetic temperature of all inner particles equals the average temperature (expect for a slight overshooting of the temperature close to the boundary).

The situation changes if the particle masses are not constant. The upper panel of Figure   \ref{fig:linearChainTemp} shows the kinetic temperature profile for a system of $N=2000$ particles when all  particles have different masses (solid blue curve). Even though there is still a temperature jump  between the boundary particles and their nearest neighbours, the inner particles display a roughly linear temperature profile in accordance with Fourier's law. The observed behaviour is relatively robust with regard to changes in the distribution of the particle masses (as long as the matrix $A$ remains stable, with the principal eigenvalue being well separated from the imaginary axis).   

The effect of the non-uniform mass distribution on the system can be quantified by computing the free energy between the two nonequilibrium steady states
\begin{equation}
    \mu_0:=\cN(0,\Sigma_0)\,,\quad \mu_1:=\cN(0,\Sigma_1)
\end{equation}
for the nonequilibrium system with unit mass matrix and $\Sigma_0=\Sigma_{\rm ne}$, and for the nonequilibrium system with nonuniform mass matrix. Then 
\begin{equation}
    -\log\frac{d\mu_1}{d\mu_0} = \frac{1}{2}x^\top\left(\Sigma_1^{-1} - \Sigma_0^{-1}\right)x + \frac{1}{2}\log\frac{\det\Sigma_1}{\det\Sigma_0}\,,
\end{equation}
where the first summand defines the nonequilibrium potential $G$, whereas the second summand is the free energy difference between the two states. The lower panel of Figure \ref{fig:linearChainTemp} shows the free energy difference, together with its upper and lower bounds and arithmetic mean and mean-variance approximations.

It is probably fair to say that heat conduction in oscillator lattices is still not fully understood -- even for low-dimensional models. We refer to \cite{Benenti2023,Dhar2008} and the references therein for two comprehensive reviews. However, we stress that it is not the aim of this paper to improve the state-of-the-art of the vast field of heat conduction models, we rather want to look at existing model from the perspective of free energy computation. These models are interesting as they are prime examples of nonequilibrium systems, for which the computation of free energy profiles or, likewise, cumulant generating functions or large deviations rate functions plays an important role \cite{Saito2011}. 

\begin{figure}
    \centering
    \includegraphics[width=0.55\textwidth]{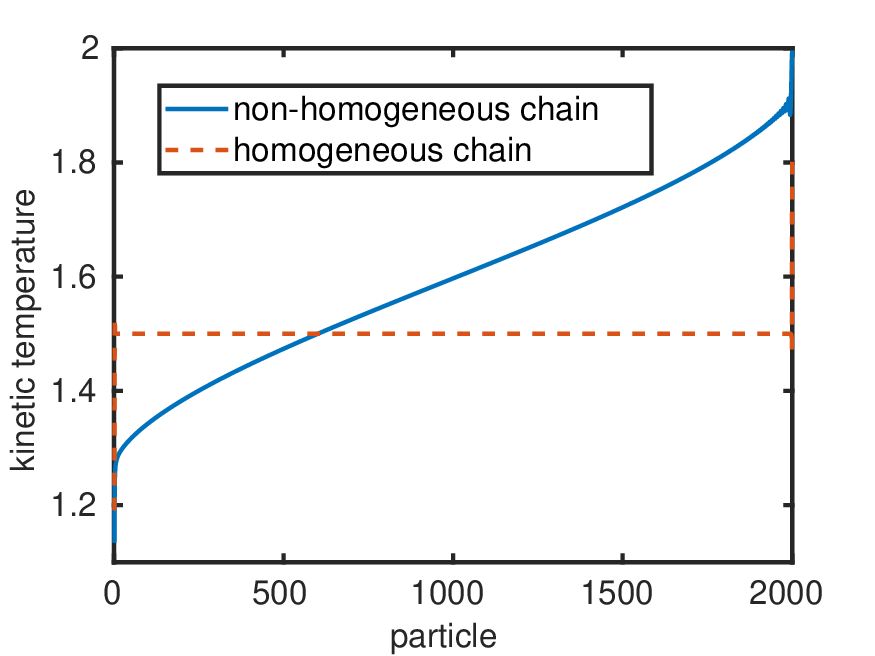}
        \includegraphics[width=0.55\textwidth]{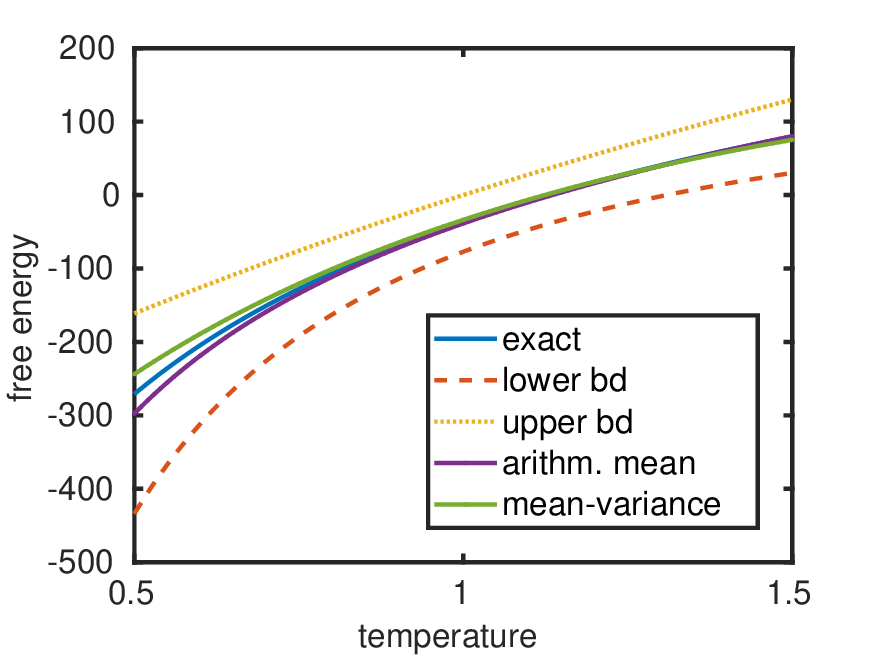}
    \caption{Kinetic temperature along the linear chain of particles (upper panel) and the free energy difference between the homogeneous and inhomogeneous chains (lower panel); the particles masses on the inhomogeneous system vary between $1/20$ and $2$ times the mass of the homogeneous chain with identocal particles. }
    \label{fig:linearChainTemp}
\end{figure}

\subsubsection*{Transient behaviour II: finite-time heat flux} We briefly mention yet another variant of the previous bounds to genuinely time-dependent phenomena; details are left for future work.  For the sake of clarity, we confine our discussion to the dynamics (\ref{langNoneq}) for $L=\{1\}$ and $d=1$.   
Suppose that we initialise the dynamics from an arbitrary initial distribution at time $t=0$. Then the total (negative) heat $Q$ transmitted in the time interval $[0,T]$ is given by \cite{Fogedby2012}
\begin{equation}
        \cQ := \gamma\int_0^T  |p_1(t)|^2 dt -  \sqrt{2\gamma T_L} \int_0^T p_1(t)\,dW_1(t)
\end{equation}
where we set again $T_L=1$. The quantity of interest to characterise the fluctuations of the heat flux is the cumulant generating function (CGF) of $Q$, i.e.
\begin{equation}\label{CGF}
    c(\alpha) = -\log\bE\!\left[e^{-\alpha \cQ}\right]\,,
\end{equation}
where the expectation is now taken over all realisations of the nonequilibrium dynamics (\ref{langNoneq}). The CGF does not yet have standard form due to the appearance of the stochastic (It\^o) integral in (\ref{CGF}). We can, however, transform it to our standard quadratic form by a Girsanov transformation (e.g.~\cite[Thm.~8.6.6]{oksendal2003stochastic}), which amounts to a change of the friction coefficient in the Langevin equation: Denoting by $\bP$ the measure on the space $C_0([0,T])$ of continuous trajectories with initial value $0$, such that $W$ is a standard Brownian motion, and defining 
\begin{equation}\label{newBM}
    \tilde{W}_1(t) = W_1(t) - \int_0^T u(t) dt\,,
\end{equation}
with $u=-\alpha\sqrt{2\gamma T_L}v_1=-\alpha \sqrt{2\gamma}v_1$, then $\tilde{W}_1$ is a standard Brownian motion with respect to a new probability measure $\bQ$, with 
\begin{equation}
    \frac{d\bQ}{d\bP} = \exp\left(- \frac{1}{2}\int_0^T |u(t)|^2\,dt + \int_0^T u(t) dW_1(t)\right)\,.
\end{equation}
Doing a change of measure from $\bP$ to $\bQ$, it follows that 
\begin{align*}
    \bE\!\left[e^{-\alpha Q}\right] & =  \bE_{\bQ}\!\left[e^{-\alpha\cQ} \frac{d\bP}{d\bQ}\right]\\
    & = \bE_{\bQ}\!\left[e^{-\alpha \mathcal{J}_\alpha(p_1)}\right]\,,
\end{align*}
with 
\begin{equation}
    \mathcal{J}_\alpha(v) = (1+\alpha)\gamma\int_0^T |v(t)|^2\,dt\,.
\end{equation}
Plugging (\ref{newBM}) into the Langevin equation (\ref{langNoneq}), we see that  under the new probability measure $\bQ$ the equation for $p_1$ turns into 
\begin{equation}
		dp_1 = (q_2 - 2q_1) dt - (1+2\alpha)\gamma p_1\,dt + \sqrt{2\gamma T_L}d\tilde{W}_1 
\end{equation}
Switching back to the original probability space with probability $\bP$ and expectation $\bE[\cdot]$, and using that the path distribution of $\tilde{W}_1$ under $\bQ$ is the same as the distribution of $W_1$ under $\bP$, we obtain the following representation of the CGF: 
\begin{equation}\label{CGF2}
    c(\alpha) = -\log\bE\!\left[\exp\left(-\alpha+\alpha^2)\gamma\int_0^T |p_1(t)|^2\,dt\right)\right],
\end{equation}
where the expectation is over all realisations of the nonequilibrium Langevin equation with modified friction for the leftmost particle: 
\begin{equation}\label{langNoneq2}
	\begin{aligned}
		dq_i & = \frac{p_i}{m_i}dt\,, \quad i=1\ldots N\\
		dp_i & = (q_{i-1} - 2q_i + q_{i+1})\,dt\,, \quad 1<i<N\\
		dp_1 & = (q_2 - 2q_1) dt - (1+2\alpha)\gamma p_1\,dt + \sqrt{2\gamma T_L}d W_1 \\
		dp_N & = (q_{N-1} - 2q_N)\,dt - \gamma p_N\,dt + \sqrt{2\gamma T_R}dW_N  \\
	\end{aligned}
\end{equation}
Our quantify of interest being a CGF, a similar argument as in the proof of Theorem \ref{thm:bogo} shows that it satisfies the following bound  
\begin{equation}
 \bE\!\left[\mathcal{J}_0(p_1)\right]   \le  -\frac{1}{\alpha}\log\bE\!\left[e^{-\alpha \mathcal{J}_\alpha(p_1)}\right]\le \bE\!\left[\mathcal{J}_\alpha(p_1)\right].
\end{equation}
We refrain from digging deeper here and leave the analysis of the finite-time case for future work. 

\begin{rem}
    For anharmonic chains, such as the Fermi-Pasta-Ulam chain, all bounds from the previous paragraphs can be used, and we expect to see similar trends in the approximation fidelity. The main difficulty consists in the estimation of the potential $G$ that requires the estimatation of log-densities. We will address the nonlinear case in  future work and refer to \cite{Sullivan1988,Hazelton2016} for the estimation of log-densities. 
\end{rem}


\section{Conclusions}\label{sec:conc}
The relevant methodological message of the paper is that 
the interpretation of the free energy as a cumulant generating function helps for a systematic derivation of (computationally simple) bounds and the systematic extension to nonequilibrium systems.
In this context, we have discussed the estimate of the free energy difference between equilibrium and nonequilibrium generated in a chain of oscillators, which describes the situation out of equilibrium produced by coupling the system to two different thermostats. The estimate is done by defining the upper and lower bound to the free energy difference employing the two-sided Bogoliubov inequality and various derived estimates, such as the mean-variance approximation. 

Numerical results for the heat conduction based on the free energy approach are shown for the different variants of the Gaussian chain. An interesting result is the difference between the homogeneous (particles with the same mass) and non homogeneous (particles with different masses) that reveals a rather different process for the heat conduction. Since it is widely recognized that the heat conduction in oscillator lattices is an ongoning field of research with many aspects still unknown, the perspective of free energy computation put forward in this paper may stimulate paths of research not yet considered in the field.

\section*{Acknowledgement} This article is devoted to Giovanni Ciccotti on the occasion of his 80th birthday. It was him who has introduced both authors of this article  to the (still) exciting field of free energy computations.

\appendix

\section{FIR inequality and relative entropy bounds}\label{sec:fir}

We want to compare the density evolutions of equilibrium and nonequilibrium dynamics. To this end, recall  (\ref{langPHS0}) that defines our reference dynamics, with constant heat bath and arbitrary Hamiltonian (i.e. linear or nonlinear), whereas (\ref{langPHS1}) defines the nonequilibrium dynamics, with two different heat baths. 

\subsection{FIR inequality}

To compare the time-$t$ marginals of the two equations (\ref{langPHS0}) and (\ref{langPHS1}), we first write the Fokker-Planck equation associated with (\ref{langPHS1}) as a continuity equation
\begin{equation}\label{langevinFPE}
\frac{\partial \rho}{\partial t} + \nabla\cdot\left(v\rho\right) = 0\,,
\end{equation}
with the time-dependent vector field
\begin{equation}
v_t(x) = (J - D)\nabla H(x) - \frac{1}{2}CC^\top \nabla\log\rho_t(x)\,.
\end{equation}
We use the shorthand notation $\rho_t=\rho(\cdot,t)$ for the smooth probability density of $x(t)=(q(t),p(t))\in\cX:=\R^{3Nd}\times\R^{3Nd}$ at time $t>0$. We further let 
\begin{equation}\label{langevinFPE2}
\frac{\partial \eta}{\partial t} + \nabla\cdot\left(\bar{v}\eta\right) = 0\,,
\end{equation}
be the Fokker-Planck equation associated with the reference dynamics (\ref{langPHS0}), with 
\begin{equation}
\bar{v}_t(x) = (J-D)\nabla H(x) - \frac{1}{2}\bar{C}\bar{C}^\top \nabla\log\eta_t(x)\,.
\end{equation}
To compare the two probability density functions $\rho_t$ and $\eta_t$, we will employ the relative entropy, also known as Kullback-Leibler (KL) divergence that provides an upper bound for their $L^1$ distance (see \cite{Csiszar1967}).

\begin{defn}[Relative entropy]\label{defn:kl}
	We define the relative entropy between two probability density functions $\rho,\eta\colon\cX\to[0,\infty)$, here with $\cX=\R^{3Nd}\times\R^{3Nd}$, as
	\begin{equation}\label{KL}
	\KL(\rho,\eta) = \int_{\cX}\log\frac{\rho(x)}{\eta(x)}\rho(x)\,dx
	\end{equation}
	provided that
	\begin{equation*}
	\int_{\{\eta(x)=0\}} \rho(x)\,dx = 0\,,
	\end{equation*}
	otherwise we set $\KL(\rho,\eta)=\infty$.
\end{defn} 
Assuming that $\rho_t$ is sufficiently decaying at infinity\footnote{Specifically, we assume that \begin{equation*}
	\lim_{|x|\to\infty} v_t(x)\rho_t(x) = 0\,,\quad 	\lim_{|x|\to\infty} \bar{v}_t(x)\rho_t(x) = 0\,,\quad
	\lim_{|x|\to\infty}  \bar{v}_t(x)\rho_t(x)\log\left(\frac{\rho_t(x)}{\eta_t(x)}\right) = 0\,.
	\end{equation*}} and that $\KL(\rho_t,\eta_t)<\infty$ for all $t>0$, it is straighforward to show that 
\begin{equation}\label{KLrate}
\frac{d}{dt} \KL(\rho_t,\eta_t) = \int_{\cX} \nabla\log\left(\frac{\rho_t}{\eta_t}\right)\cdot\left(v_t-\bar{v}_t\right)\rho_t \,dx\,,
\end{equation}
since (using a dot to denote the partial derivative \wrt $t$)
\begin{align*}
\frac{d}{dt} \KL(\rho_t,\eta_t) & = \int_{\cX} \left(\frac{\dot\rho_t}{\rho_t}-\frac{\dot\eta_t}{\eta_t}\right)\rho_t\,dx + \int_{\cX} \dot{\rho}_t\log\left(\frac{\rho_t}{\eta_t}\right)dx\\
 & = \int_{\cX} \left(-\nabla\cdot(v\rho_t) + \frac{\rho_t}{\eta_t}\nabla\cdot(\bar{v}\eta_t) - \log\left(\frac{\rho_t}{\eta_t}\right)\nabla\cdot(v\rho_t)\right)dx\\
  & = \int_{\cX} \left(\nabla\log\left(\frac{\rho_t}{\eta_t}\right)\cdot(v\rho)-\nabla\left(\frac{\rho_t}{\eta_t}\right)\cdot(\bar{v}\eta_t)\right)dx\\
  & = \int_{\cX} \nabla\log\left(\frac{\rho_t}{\eta_t}\right)\cdot\left(v-\bar{v}\right)\rho_t\,dx\,.
\end{align*}
Here we have substituted (\ref{langevinFPE}) and (\ref{langevinFPE2}) in step 2 and used integration by parts in step 3, assuming fast decay of the density $\rho_t$ which allowed us to ignore boundary terms. The last expression can be turned into an inequality connecting relative entropy, Fisher
information (FI) and a large deviations rate functional (R); for this reason it is termed FIR inequality (cf.~\cite{Roeckner2016,Pavon2006,Sharma2017}). 

Here we give only a formal derivation of the FIR inequality. To this end, we will need the following assumption:

\begin{ass}\label{ass:noise}
    The two noise coefficients have the same range and kernels, i.e.
    \begin{equation}
        {\rm ran}(C) = {\rm ran}(\bar{C})\,,\quad \ker(C) = \ker(\bar{C})
    \end{equation}
\end{ass}

\begin{thm}[FIR inequality, e.g.~\cite{Sharma2017}]\label{thm:fir}
	For all $\tau>0$ and under Assumption \ref{ass:noise}, 
	\begin{equation}\label{FIR}
	\frac{d}{dt} \KL(\rho_t,\eta_t)\le  \frac{\tau-1}{2}\mathcal{F}(\rho_{t},\eta_{t}) + \frac{1}{2\tau} \cL(\rho_{t})\,,
	\end{equation}
where 
\begin{equation}\label{FI}
\mathcal{F}(\rho_t,\eta_t) = \int_{\cX} \left|\nabla\log\left(\frac{\rho_t(x)}{\eta_t(x)}\right)\right|_{\bar{C}\bar{C}^\top}^2\rho_t(x)\,dx
\end{equation}
is the relative Fisher information, with $|z|^2_Q=z^\top Q z$ denoting the $Q$-weighted Euclidean vector seminorm with a semidefinite weight matrix $Q=\bar{C}\bar{C}^\top\ge 0$, and 
\begin{equation}\label{rate}
	\cL(\rho_t) = \frac{1}{4}\int_{\cX}\left|\left(\bar{C}\bar{C}^\top - CC^\top\right)\nabla\log\rho_t\right|_{(\bar{C}\bar{C}^{\top})^{\sharp}}^{2}\rho_t \,dx
\end{equation}
is called the Donsker-Varadhan large deviations rate functional. The expression $|z|^2_{Q^{\sharp}}=z^\top Q^\sharp z$ is understood similarly to $|z|_{Q}$ above, with the key difference that only the invertible block of $Q$ is inverted, and all other entries are set to zero (i.e. $Q^\sharp$ is the Moore-Penrose pseudoinverse of the semidefinite matrix $Q=\bar{C}\bar{C}^{\top}$).
\end{thm}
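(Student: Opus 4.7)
The starting point is the identity for $\frac{d}{dt}\KL(\rho_t,\eta_t)$ derived immediately before the theorem statement, namely
\begin{equation*}
\frac{d}{dt}\KL(\rho_t,\eta_t) = \int_{\cX}\nabla\log(\rho_t/\eta_t)\cdot(v_t-\bar{v}_t)\,\rho_t\,dx.
\end{equation*}
The plan is to decompose the integrand into a ``good'' dissipative term carrying the full Fisher information and a ``bad'' cross term which will then be controlled by Young's inequality. To that end I would substitute the explicit forms of $v_t$ and $\bar v_t$ and add and subtract $\tfrac12\bar C\bar C^\top\nabla\log\rho_t$ to obtain
\begin{equation*}
v_t-\bar v_t = \tfrac12(\bar C\bar C^\top - CC^\top)\nabla\log\rho_t - \tfrac12\bar C\bar C^\top\nabla\log(\rho_t/\eta_t).
\end{equation*}
Setting $u:=\nabla\log(\rho_t/\eta_t)$ and plugging in yields
\begin{equation*}
\frac{d}{dt}\KL(\rho_t,\eta_t) = -\tfrac12\cF(\rho_t,\eta_t) + \tfrac12\int_{\cX} u\cdot(\bar C\bar C^\top - CC^\top)\nabla\log\rho_t\,\rho_t\,dx,
\end{equation*}
so everything reduces to estimating the cross term.

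The second step is the crucial use of Assumption \ref{ass:noise}. Because $\mathrm{ran}(C)=\mathrm{ran}(\bar C)$ and $\ker(C)=\ker(\bar C)$, the vector $(\bar C\bar C^\top - CC^\top)\nabla\log\rho_t$ lies in $\mathrm{ran}(\bar C\bar C^\top)$, so it can be rewritten as $\bar C\bar C^\top w$ with $w=(\bar C\bar C^\top)^{\sharp}(\bar C\bar C^\top-CC^\top)\nabla\log\rho_t$. Then $u\cdot(\bar C\bar C^\top - CC^\top)\nabla\log\rho_t = \langle u,w\rangle_{\bar C\bar C^\top}$ is a semi-inner product, and Cauchy--Schwarz gives $|\langle u,w\rangle_{\bar C\bar C^\top}|\le |u|_{\bar C\bar C^\top}\,|w|_{\bar C\bar C^\top}$. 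Using the identity $Q^{\sharp}QQ^{\sharp}=Q^{\sharp}$ for the Moore--Penrose pseudoinverse of a symmetric $Q=\bar C\bar C^\top$, one checks $|w|_{\bar C\bar C^\top}^2 = |(\bar C\bar C^\top-CC^\top)\nabla\log\rho_t|_{(\bar C\bar C^\top)^{\sharp}}^{2}$.

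Finally I would apply the weighted Young inequality $ab\le \tau a^2+\frac{b^2}{4\tau}$ pointwise with $a=|u|_{\bar C\bar C^\top}$ and $b=|(\bar C\bar C^\top-CC^\top)\nabla\log\rho_t|_{(\bar C\bar C^\top)^{\sharp}}$, integrate against $\rho_t$, and recognise the first term as $\tau\cF(\rho_t,\eta_t)$ and the second, via the factor of $1/4$ hidden in the definition of $\cL$, as $\cL(\rho_t)/\tau$. Dividing the cross-term estimate by $2$ and combining with the $-\tfrac12\cF$ above produces
\begin{equation*}
\frac{d}{dt}\KL(\rho_t,\eta_t) \le \frac{\tau-1}{2}\cF(\rho_t,\eta_t) + \frac{1}{2\tau}\cL(\rho_t),
\end{equation*}
as claimed.

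I expect the only genuinely delicate step to be the second one: justifying Cauchy--Schwarz and the appearance of the pseudoinverse when $\bar C\bar C^\top$ is only positive semidefinite. This is precisely where Assumption \ref{ass:noise} is used, since one needs $(\bar C\bar C^\top-CC^\top)\nabla\log\rho_t$ to lie in the range of $\bar C\bar C^\top$ in order for $\langle\cdot,\cdot\rangle_{\bar C\bar C^\top}$ to capture the cross term without loss. All remaining manipulations, including the regularity/decay assumptions required to swap time derivatives and integration as was already done in the prequel to the theorem, are standard and routine.
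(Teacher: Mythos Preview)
Your proposal is correct and follows essentially the same route as the paper: the identical decomposition of $v_t-\bar v_t$ producing $-\tfrac12\mathcal{F}$ plus the cross term, the same use of Assumption~\ref{ass:noise} to insert $\bar C\bar C^\top(\bar C\bar C^\top)^\sharp$ in front of $(\bar C\bar C^\top-CC^\top)\nabla\log\rho_t$, and then Cauchy--Schwarz followed by Young's inequality with parameter $\tau$. The only cosmetic difference is that the paper applies Cauchy--Schwarz at the $L^2(\cX)$ level (after the rewriting $\nabla\log(\rho_t/\eta_t)\,\rho_t=\nabla(\rho_t/\eta_t)\,\eta_t$ and a $\sqrt{\rho_t}$ splitting), whereas you apply it pointwise via the semi-inner product $\langle\cdot,\cdot\rangle_{\bar C\bar C^\top}$ and then integrate; both lead to the same bound.
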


\begin{proof}
We start from the representation (\ref{KLrate}). Substituting $v$ and $\bar{v}$ we obtain by adding a zero: 	
\begin{align*}
\frac{d}{dt} \KL(\rho_t,\eta_t) = & \int_{\cX} \nabla\log\left(\frac{\rho_t}{\eta_t}\right)\cdot\left(\frac{1}{2}\bar{C}\bar{C}^\top \nabla\log\eta_t - \frac{1}{2}CC^\top \nabla\log\rho_t\right)\rho_t \,dx\\
 = &  -\frac{1}{2}\int_{\cX} \left|\nabla\log\left(\frac{\rho_t}{\eta_t}\right)\right|_{\bar{C}\bar{C}^\top}^2\rho_t\,dx\\ & +  \frac{1}{2}\int_{\cX} \nabla\log\left(\frac{\rho_t}{\eta_t}\right)\cdot\left(\left(\bar{C}\bar{C}^\top - CC^\top\right)\nabla\log\rho_t\right)\rho_t \,dx\\
 = & -\frac{1}{2}\mathcal{F}(\rho_t,\eta_t) + \frac{1}{2} \int_{\cX} \nabla\log\left(\frac{\rho_t}{\eta_t}\right)\cdot\left(\left(\bar{C}\bar{C}^\top - CC^\top\right)\nabla\log\rho_t\right)\rho_t \,dx\,.
\end{align*}
We set $Q=\bar{C}\bar{C}^\top - CC^\top$ and consider the second integral. Using Cauchy-Schwarz and Young's inequality, we find under Assumption   \ref{ass:noise}: 
\begin{align*}
I = & \frac{1}{2}\int_{\cX} \nabla\log\left(\frac{\rho_t}{\eta_t}\right)\cdot\left(Q\nabla\log\rho_t\right)\rho_t \,dx\\ 
 = & \frac{1}{2}\int_{\cX} \nabla\left(\frac{\rho_t}{\eta_t}\right)\cdot\left(Q\nabla\log\rho_t\right)\eta_t \,dx\\
= & \frac{1}{2}\int_{\cX} \nabla\left(\frac{\rho_t}{\eta_t}\right)\cdot\left(\bar{C}\bar{C}^\sharp Q\nabla\log\rho_t\right)\eta_t \,dx\\
= & \frac{1}{2}\int_{\cX} \bar{C}^\top\nabla\left(\frac{\rho_t}{\eta_t}\right)\frac{\eta_{t}}{\sqrt{\rho_{t}}}\cdot \left(\bar{C}^{\sharp}Q\nabla\log\rho_t\right)\sqrt{\rho_{t}} \,dx\\
\le & \sqrt{\int_{\cX} \left| \nabla\left(\frac{\rho_t}{\eta_t}\right)\right|^{2}_{\bar{C}\bar{C}^{\top}}\frac{\eta_{t}^{2}}{\rho_{t}}\,dx}\sqrt{\frac{1}{4}\int_{\cX}\left|Q\nabla\log\rho_t\right|_{(\bar{C}\bar{C}^{\top})^{\sharp}}^{2}\rho_t \,dx}\\
\le & \frac{\tau}{2}\int_{\cX} \left| \nabla_y\left(\frac{\rho_t}{\eta_t}\right)\right|^{2}_{\bar{C}\bar{C}^{\top}}\frac{\eta_{t}^{2}}{\rho_{t}}\,dx + \frac{1}{8\tau}\int_{\cX}\left|Q\nabla\log\rho_t\right|_{(\bar{C}\bar{C}^{\top})^{\sharp}}^{2}\rho_t \,dx
\end{align*}
for all $\tau>0$, where $(\bar{C}\bar{C}^{\top})^{\sharp}$ denotes the Moore-Penrose pseudoinverse of the block diagonal matrix $\bar{C}\bar{C}^{\top}$. Taking a closer look at the first integral, we see that  
\begin{equation*}
\int_{\cX} \left| \nabla\left(\frac{\rho_t}{\eta_t}\right)\right|^{2}_{\bar{C}\bar{C}^{\top}}\frac{\eta_{t}^{2}}{\rho_{t}}\,dx 
= \int_{\cX} \left| \nabla\log\left(\frac{\rho_t}{\eta_t}\right)\right|^{2}_{\bar{C}\bar{C}^{\top}}\rho_{t}\,dx 
= \mathcal{F}(\rho_{t},\eta_{t})\,.
\end{equation*}
As a consequence, 
\begin{equation}
\frac{d}{dt} \KL(\rho_t,\eta_t) \le \frac{\tau-1}{2}\mathcal{F}(\rho_{t},\eta_{t}) + \frac{1}{2\tau} \cL(\rho_{t})\,,
\end{equation}
which concludes the proof. 
\end{proof}

\subsection{A simplified FIR inequality for the total entropy production}

In (\ref{FIR}) we may tighten the upper bound by minimising over $\tau>0$. From a practical point of view, however, it is more convenient to set $\tau=1$, as a consequence of which the relative Fisher information term vanishes. Assuming that $\rho_0=\eta_0=\rho_{\rm e}$, and integrating the entropy rate from $t=0$ to $t=T$, we have
\begin{equation}\label{totalent}
\KL(\rho_T,\rho_{\rm e})\le 
\frac{1}{8}\int_{0}^{T} \left|\left(\bar{C}\bar{C}^\top - CC^\top\right)\nabla\log\rho_t\right|^2_{(\bar{C}\bar{C}^{\top})^{\sharp}}\,\rho_t\,dx dt\,. 
\end{equation}
Note that (\ref{expConv}) implies that the average entropy production rate can be bounded from above by  
\begin{equation}
    R:= \lim_{T\to\infty}\frac{1}{8T}\int_{0}^{T} \left|\left(\bar{C}\bar{C}^\top - CC^\top\right)\nabla\log\rho_t\right|^2_{(\bar{C}\bar{C}^{\top})^{\sharp}}\,\rho_t\,dx dt
\end{equation}
with
\begin{equation}
    R = \frac{1}{8}\bE_{\mu_{\rm ne}}\!\left[\left|\left(\bar{C}\bar{C}^\top - CC^\top\right)\nabla\log\rho_{\rm ne}\right|^2_{(\bar{C}\bar{C}^{\top})^{\sharp}}\right]
\end{equation}
Further approximations of the relative entropy production will be discussed for the specific case considered in Section \ref{sec:cases}.


\begin{thebibliography}{10}
	
	\bibitem{TTT19}
	P.~T. Alonso, P.~Abiuso, and L.~Perarnau-Lloblet, M.~Arrachea.
	\newblock Geometric optimization of nonequilibrium adiabatic thermal machines
	and implementation in a qubit system.
	\newblock {\em Phys. Rev. X Quant.}, 3:010326, 2022.
	
	\bibitem{Antoulas2005}
	A.~C. Antoulas.
	\newblock {\em Approximation of Large-Scale Dynamical Systems}.
	\newblock SIAM, Philadelphia, 2005.
	
	\bibitem{Arnold2014}
	A.~Arnold and J.~Erb.
	\newblock Sharp entropy decay for hypocoercive and non-symmetric
	{Fokker}--{Planck} equations with linear drift, 2014.
	
	\bibitem{TTT2}
	L.~Atia and S.~Gavli.
	\newblock A theoretical study of biological membrane response to temperature
	gradient at the single-cell level.
	\newblock {\em J. Roy. Soc. Interface}, 11:20131207, 2014.
	
	\bibitem{TTT1c}
	A.~R.~N. Bastos, C.~D.~S. Brites, P.~Roja-Gutierrez, C.~DeWolf, R.~A.~S.
	Ferreira, J.~A. Capoblanco, and L.~D. Carlos.
	\newblock Thermal properties of lipib bilayers determined using upconversion
	nanothermometry.
	\newblock {\em Adv. Funct. Mat.}, 29:1905474, 2019.
	
	\bibitem{Benenti2023}
	G.~Benenti, D.~Donadio, S.~Lepri, and R.~Livi.
	\newblock Non-fourier heat transport in nanosystems.
	\newblock {\em Riv. Nuovo Cimento}, 46(3):105--161, 2023.
	
	\bibitem{Roeckner2016}
	V.~I. Bogachev, M.~R{\"o}ckner, and S.~V. Shaposhnikov.
	\newblock Distances between transition probabilities of diffusions and
	applications to nonlinear {Fokker}--{Planck}--{Kolmogorov} equations.
	\newblock {\em J. Funct. Anal.}, 271(5):1262--1300, 2016.
	
	\bibitem{giovanni-noneq}
	S.~Bonella, M.~Ferrario, and G.~Ciccotti.
	\newblock Thermal diffusion in binary mixtures: Transient behavior and
	transport coefficients from equilibrium and nonequilibrium molecular
	dynamics.
	\newblock {\em Langmuir}, 33:11281--11290, 2017.
	
	\bibitem{breiten2021stochastic}
	T.~Breiten, C.~Hartmann, L.~Neureither, and U.~Sharma.
	\newblock Stochastic gradient descent and fast relaxation to thermodynamic
	equilibrium: a stochastic control approach.
	\newblock {\em J. Math. Phys.}, 62(12), 2021.
	
	\bibitem{TTT12}
	F.~Bremse, A.~Lervik, D.~Bedeaux, and S.~Kjelstrup.
	\newblock Water polarization under thermal gradient.
	\newblock {\em Phys. Rev. Lett.}, 101:020602, 2008.
	
	\bibitem{Carmona2007}
	P.~Carmona.
	\newblock Existence and uniqueness of an invariant measure for a chain of
	oscillators in contact with two heat baths.
	\newblock {\em Stoch. Proc. Appl.}, 117(8):1076--1092, 2007.
	
	\bibitem{Csiszar1967}
	I.~Csisz{\'a}r.
	\newblock Information-type measures of difference of probability distributions
	and indirect observation.
	\newblock {\em Studia Sci. Math. Hungar.}, 2:229--318, 1967.
	
	\bibitem{DaiPra1996}
	P.~Dai~Pra, L.~Meneghini, and W.~J. Runggaldier.
	\newblock Connections between stochastic control and dynamic games.
	\newblock {\em Math. Control Signals Systems}, 9:303--326, 1996.
	
	\bibitem{TTT9a}
	C.~Debuschewitz and W.~K\"{o}hler.
	\newblock Molecular origin of thermal diffusion in benzene+cyclohexane mixture.
	\newblock {\em Phys. Rev. Lett.}, 87:055901, 2001.
	
	\bibitem{DelleSite2017}
	L.~Delle~Site, G.~Ciccotti, and C.~Hartmann.
	\newblock Partitioning a macroscopic system into independent subsystems.
	\newblock {\em J. Stat. Mech.: Theory Exp.}, 2017(8):083201, 2017.
	
	\bibitem{Derrida2003}
	B.~Derrida, J.~Lebowitz, and E.~Speer.
	\newblock Exact large deviation functional of a stationary open driven
	diffusive system: the asymmetric exclusion process.
	\newblock {\em J. Stat. Phys.}, 110(3):775--810, 2003.
	
	\bibitem{Dhar2008}
	A.~Dhar.
	\newblock Heat transport in low-dimensional systems.
	\newblock {\em Adv. Phys.}, 57(5):457--537, 2008.
	
	\bibitem{TTT1}
	J.~S. Donner, S.~A. Thompson, M.~P. Kreuzer, G.~Baffou, and R.~Quidant.
	\newblock Mapping intracellular temperature using green fluorescent protein.
	\newblock {\em Nano Lett.}, 12:2107--2111, 2012.
	
	\bibitem{TTT7a}
	S.~Duhr and D.~Braun.
	\newblock Thermophoretic depletion follows {Boltzmann} distribution.
	\newblock {\em Phys. Rev. Lett.}, 96:168301, 2006.
	
	\bibitem{noneq1}
	R.~Ebrahimi~Viand, F.~H\"{o}fling, R.~Klein, and L.~Delle~Site.
	\newblock Theory and simulation of open systems out of equilibrium.
	\newblock {\em J. Chem. Phys.}, 153:101102, 2020.
	
	\bibitem{Fogedby2012}
	H.~C. Fogedby and A.~Imparato.
	\newblock Heat flow in chains driven by thermal noise.
	\newblock {\em Journal of Statistical Mechanics: Theory and Experiment},
	2012(4):P04005, 2012.
	
	\bibitem{TTT8}
	R.~Ganti, Y.~Liu, and D.~Frenkel.
	\newblock Molecular simulation of thermo-osmotic slip.
	\newblock {\em Phys. Rev. Lett.}, 119:038002, 2017.
	
	\bibitem{prlabbas}
	A.~Gholami, R.~Klein, and L.~Delle~Site.
	\newblock Simulation of a particle domain in a continuum, fluctuating
	hydrodynamics reservoir.
	\newblock {\em Phys. Rev. Lett.}, 129:230603, 2022.
	
	\bibitem{Hazelton2016}
	M.~L. Hazelton and M.~P. Cox.
	\newblock Bandwidth selection for kernel log-density estimation.
	\newblock {\em Comput. Stat. Data Anal.}, 103:56--67, 2016.
	
	\bibitem{TTT5a}
	M.~Heinen, J.~Vrabec, and J.~Fischer.
	\newblock Communication: Evaporation: Influence of heat transport in the liquid
	on the interface temperature and the particle flux.
	\newblock {\em J. Chem. Phys.}, 145:081101, 2016.
	
	\bibitem{henin2017}
	J.~H{\'e}nin, T.~Leli{\`e}vre, M.~R. Shirts, O.~Valsson, and L.~Delemotte.
	\newblock Enhanced sampling methods for molecular dynamics simulations.
	\newblock {\em Living Journal of Computational Molecular Science}, 4(1):1583,
	2022.
	
	\bibitem{Hewer1988}
	G.~Hewer and C.~Kenney.
	\newblock The sensitivity of the stable lyapunov equation.
	\newblock {\em SIAM J. Control Optim.}, 26(2):321--344, 1988.
	
	\bibitem{noneq2}
	R.~Klein, R.~Ebrahimi~Viand, F.~H\"{o}fling, and L.~Delle~Site.
	\newblock Nonequilibrium induced by reservoirs: Physico-mathematical models and
	numerical tests.
	\newblock {\em Advanced Theory and Simulations}, 4:210071, 2021.
	
	\bibitem{Leimkuhler2019}
	B.~Leimkuhler, C.~Matthews, and T.~Vlaar.
	\newblock Partitioned integrators for thermodynamic parameterization of neural
	networks.
	\newblock {\em Foundations of Data Science}, 1(4):457--489, 2019.
	
	\bibitem{TTT20}
	S.~Lepri~(Ed.).
	\newblock {\em Thermal Transport in Low Dimensions}.
	\newblock Lecture Notes n Physics 921, Springer International Publishing,
	Switzerland, 2016.
	
	\bibitem{TTT10b}
	A.~Lervik, F.~Bremse, and S.~Kjelstrup.
	\newblock Heat transfer in soft nanoscale interfaces: The influence of
	interface curvature.
	\newblock {\em Phys. Chem. Chem. Phys.}, 5:2407--2414, 2009.
	
	\bibitem{TTT11}
	A.~Lervik, F.~Bremse, S.~Kjelstrup, D.~Bedeaux, and J.~Rubi.
	\newblock Heat transfer in protein-water interface.
	\newblock {\em Phys. Chem. Chem. Phys.}, 12:1610--1617, 2010.
	
	\bibitem{TTT17}
	J.~Liu, S.~Chen, X.~Nie, and M.~O. Robbins.
	\newblock A continuum-atomistic simulation of heat transfer in micro-and
	nano-flows.
	\newblock {\em J. Comp. Phys}, 227:279--291, 2007.
	
	\bibitem{Overton1995}
	J.~H. Maddocks and M.~L. Overton.
	\newblock Stability theory for dissipatively perturbed {Hamiltonian} systems.
	\newblock {\em Comm. Pure Appl. Math.}, 48:583--610, 1995.
	
	\bibitem{TTT7b}
	Y.~T. Maeda, A.~Buguin, and A.~Libchabe.
	\newblock Thermal separation: Interplay between the soret effect and entropic
	force gradient.
	\newblock {\em Phys. Rev. Lett.}, 107:038301, 2011.
	
	\bibitem{marinari1992simulated}
	E.~Marinari and G.~Parisi.
	\newblock Simulated tempering: a new monte carlo scheme.
	\newblock {\em Europhys. Lett.}, 19(6):451, 1992.
	
	\bibitem{newbery1988accuracy}
	D.~M. Newbery.
	\newblock On the accuracy of the mean-variance approximation for futures
	markets.
	\newblock {\em Econ. Lett.}, 28(1):63--68, 1988.
	
	\bibitem{TTT1b}
	K.~Okabe, N.~Inada, C.~Gota, T.~harada, T.~Funatsu, and S.~Uchiyama.
	\newblock Intra-cellural temperature mapping with a fluorescent polymeric
	thermometer and fluorescence lifetime imaging microscopy.
	\newblock {\em Nat. Comm.}, 3:705, 2012.
	
	\bibitem{oksendal2003stochastic}
	B.~{\O}ksendal.
	\newblock {\em Stochastic differential equations}.
	\newblock Springer, 2003.
	
	\bibitem{Sullivan1988}
	F.~O'Sullivan.
	\newblock Fast computation of fully automated log-density and log-hazard
	estimators.
	\newblock {\em SIAM J. Sci. Comput.}, 9(2):363--379, 1988.
	
	\bibitem{TTT4}
	M.~Pascual, A.~Poquet, and A.~Vilquin.
	\newblock Phase separation of an ionic liquid mixture assisted by a temperature
	gradient.
	\newblock {\em Phys. Rev. Fluids}, 6:024001, 2021.
	
	\bibitem{Pavon2006}
	M.~Pavon and F.~Ticozzi.
	\newblock On entropy production for controlled {Markovian} evolution.
	\newblock {\em J. Math. Phys.}, 47(6), 2006.
	
	\bibitem{TTT3}
	S.~Perkin, B.~Kirchner, and M.~Fayer.
	\newblock Preface: Special topic on chemical physics of ionic liquids.
	\newblock {\em J. Chem. Phys.}, 148:193501, 2019.
	
	\bibitem{Reible2022}
	B.~M. Reible, C.~Hartmann, and L.~Delle~Site.
	\newblock Two-sided bogoliubov inequality to estimate finite size effects in
	quantum molecular simulations.
	\newblock {\em Lett. Math. Phys.}, 112(5):97, 2022.
	
	\bibitem{Reible2023}
	B.~M. Reible, J.~F. Hille, C.~Hartmann, and L.~Delle~Site.
	\newblock Finite-size effects and thermodynamic accuracy in many-particle
	systems.
	\newblock {\em Phys. Rev. Res.}, 5(2):023156, 2023.
	
	\bibitem{Rieder1967}
	Z.~{Rieder}, J.~L. {Lebowitz}, and E.~{Lieb}.
	\newblock {Properties of a Harmonic Crystal in a Stationary Nonequilibrium
		State}.
	\newblock {\em J. Math. Phys.}, 8(5):1073--1078, 1967.
	
	\bibitem{TTT6}
	F.~R\"{o}mer, F.~Bremse, J.~Muscatello, D.~Bedeaux, and J.~M. Rubi.
	\newblock Thermomolecular orientation of nonpolar fluids.
	\newblock {\em Phys. Rev. Lett.}, 108:105901, 2012.
	
	\bibitem{TTT9b}
	S.~Roy, S.~Dietrich, and A.~Maciolek.
	\newblock Solvent coarsening around colloids driven by temperature gradients.
	\newblock {\em Phys. Rev. E}, 97:042603, 2018.
	
	\bibitem{Saito2011}
	K.~Saito and A.~Dhar.
	\newblock Generating function formula of heat transfer in harmonic networks.
	\newblock {\em Physical Review E}, 83(4):041121, 2011.
	
	\bibitem{TTT13}
	S.~Schmitt, T.~Vo, M.~P. Lautenschlaeger, S.~Stephan, and H.~Hasse.
	\newblock Molecular dynamics simulation study of heat transfer across
	solid-fluid interfaces in a simple model system.
	\newblock {\em Mol. Phys.}, 120:10, 2022.
	
	\bibitem{Sharma2017}
	U.~Sharma.
	\newblock {\em Coarse-graining of Fokker-Planck equations}.
	\newblock PhD thesis, Technische Universiteit Eindhoven, 2017.
	
	\bibitem{TTT14}
	H.~Sun, F.~Li, M.~Wang, G.~Xin, and X.~Wang.
	\newblock Molecular dynamics study of convective heat transfer mechanism in a
	nano heat exchanger.
	\newblock {\em RSC Advanc.}, 10:23097--23107, 2020.
	
	\bibitem{TTT5b}
	H.~Tan, S.~Wooh, H.~J. Butt, X.~Zhang, and D.~Lohse.
	\newblock Porous supraparticle assembly through self-lubricating evaporating
	colloidal ouzo drops.
	\newblock {\em Nat. Comm.}, 10:478, 2019.
	
	\bibitem{Vershynin2018}
	R.~Vershynin.
	\newblock {\em High-dimensional probability: An introduction with applications
		in data science}, volume~47.
	\newblock Cambridge university press, 2018.
	
	\bibitem{TTT10a}
	A.~W\"{u}rger.
	\newblock Molecular-weight dependent thermal diffusion in dilute polymer
	solutions.
	\newblock {\em Phys. Rev. Lett.}, 102:078302, 2009.
	
	\bibitem{TTT18}
	Z.~Xu.
	\newblock Heat transport in low-dimensional materials: A review and
	perspective.
	\newblock {\em Th. Appl. Mech. Lett.}, 6:113--121, 2016.
	
	\bibitem{TTT15}
	F.~Yang, H.~Lu, W.~Liu, and H.~Zhou.
	\newblock Understanding the contributions of microscopic heat transfer to
	thermal conductivities of liquid aldehydes and ketones by molecular dynamics
	simulation.
	\newblock {\em J. Chem. Inf. Model.}, 60:3022--3029, 2020.
	
	\bibitem{TTT16}
	F.~Zhang, D.~J. Isbister, and D.~J. Evans.
	\newblock Nonequilibrium molecular dynamics simulations of heat flow in
	one-dimensional lattices.
	\newblock {\em Phys. Rev. E}, 61:3541, 2000.
	
\end{thebibliography}
\end{document}